\definecolor{linkblue}{named}{MidnightBlue}
\newtheorem{thm}{Theorem}
\newtheorem{clm}{Claim}
\newtheorem{remark}{Remark}
\newcommand*\patchAmsMathEnvironmentForLineno[1]{%
 \expandafter\let\csname old#1\expandafter\endcsname\csname #1\endcsname
 \expandafter\let\csname oldend#1\expandafter\endcsname\csname end#1\endcsname
 \renewenvironment{#1}%
    {\linenomath\csname old#1\endcsname}%
    {\csname oldend#1\endcsname\endlinenomath}}%
\newcommand*\patchBothAmsMathEnvironmentsForLineno[1]{%
 \patchAmsMathEnvironmentForLineno{#1}%
 \patchAmsMathEnvironmentForLineno{#1*}}%
\renewcommand{\fnum@figure}{Fig. \thefigure}
\definecolor{brew1}{rgb}{0.552941176471, 0.827450980392, 0.780392156863}
\definecolor{brew2}{rgb}{1.0, 1.0, 0.701960784314}
\definecolor{brew3}{rgb}{0.745098039216, 0.729411764706, 0.854901960784}
\definecolor{brew4}{rgb}{0.98431372549, 0.501960784314, 0.447058823529}
\definecolor{brew5}{rgb}{0.501960784314, 0.694117647059, 0.827450980392}
\definecolor{brew6}{rgb}{0.992156862745, 0.705882352941, 0.38431372549}
\definecolor{brew7}{rgb}{0.701960784314, 0.870588235294, 0.411764705882}
\definecolor{brew8}{rgb}{0.988235294118, 0.803921568627, 0.898039215686}
\definecolor{ultramarine}{rgb}{0, 0.125, 0.376}
\title{\fontsize{19}{24}\selectfont Deciding if a DAG is Interesting is Hard}
\author{
\normalsize{Jean-Lou De Carufel}\thanks{School of Electrical Engineering and Computer Science, University of Ottawa, Ottawa, Canada. Research supported by NSERC.} \and 
\normalsize{Anil Maheshwari}\thanks{School of Computer Science, Carleton University, Ottawa, Canada. Research supported by NSERC.} \and 
\normalsize{Saeed Odak}\footnotemark[1] \and 
\normalsize{Bodhayan Roy}\thanks{Department of Mathematics, Indian Institute of Technology Kharagpur, India. Research supported by SERB MATRICS Grant Number MTR/2021/000474.} \and
\normalsize{Michiel Smid}\footnotemark[2] \and 
\normalsize{Marc Vicuna}\footnotemark[2]
}
\date{}
\begin{document}

\maketitle

\begin{abstract}

The \emph{interestingness score} of a directed path $\Pi = e_1, e_2, e_3, \dots, e_\ell$ in an edge-weighted directed graph $G$ is defined as $\texttt{score}(\Pi) := \sum_{i=1}^\ell w(e_i) \cdot \log{(i+1)}$, where $w(e_i)$ is the weight of the edge $e_i$. We consider two optimization problems that arise in the analysis of Mapper graphs, which is a powerful tool in topological data analysis.
In the IP problem, the objective is to find a collection $\mathcal{P}$ of edge-disjoint paths in $G$ with the maximum total interestingness score. 
For $k \in \mathbb{N}$, the $k$-IP problem is a variant of the IP problem with the extra constraint that each path in $\mathcal{P}$ must have  exactly $k$ edges. Kalyanaraman, Kamruzzaman, and Krishnamoorthy (Journal of Computational Geometry, 2019) claim that both IP and $k$-IP (for $k \geq 3$) are NP-complete. We point out some inaccuracies in their proofs. 
Furthermore, we show that both problems are NP-hard in directed acyclic graphs. 
\end{abstract}

\section{Introduction}

The Mapper algorithm, introduced by 
Singh, M{\'{e}}moli, and Carlsson
\cite{DBLP:conf/spbg/SinghMC07}, is one of the powerful methods in topological data analysis that creates a simplified graph representation of high-dimensional data.\footnote{For a precise description of the Mapper algorithm and its applications in topological data analysis, see  \cite{coskunuzer2024topological, DBLP:conf/spbg/SinghMC07} and the references therein.} In \cite{DBLP:journals/jocg/KalyanaramanKK19}, 
Kalyanaraman, Kamruzzaman, and Krishnamoorthy 
introduce a collection of combinatorial optimization problems to quantify the interestingness of subsets of vertices in Mapper graphs. Formally, let $G = (V, E)$ be an edge-weighted directed graph with a weight function $w : E \rightarrow \mathbb{N}$. Moreover, each edge of $G$ is assigned an integer signature by a function $\texttt{sig} : E \rightarrow \mathbb{N}$. Let $\Pi = e_1, e_2, e_3, \cdots, e_k$ be a directed \emph{$k$-path} (i.e., a directed path having $k$ edges) in $G$, where $e_i$ is an edge ($1 \le i \le k$). 
The path $\Pi$ is \emph{interesting} if $\texttt{sig}(e_i)$ is identical for all $1\le i \le k$. The \emph{interestingness score} of an interesting path $\Pi$ is defined as 
\[\texttt{score}(\Pi) := \sum_{i=1}^{k} w(e_i)\cdot \log{(i+1)}.
\]
(All logarithms in this paper are in base 2.) The following three optimization problems, on a given edge-weighted directed graph $G$ with a signature function $\texttt{sig}$, have been introduced by 
Kalyanaraman, Kamruzzaman, and Krishnamoorthy \cite{DBLP:journals/jocg/KalyanaramanKK19}.

\begin{itemize}
    \item \emph{Max-IP Problem}: Find an interesting path $\Pi$ in $G$ such that $\texttt{score}(\Pi)$ is maximized.
    \item \emph{IP Problem}: Find a collection $\mathcal{P}$ of edge-disjoint interesting paths in $G$ such that the sum of the interestingness scores of the paths in $\mathcal{P}$ is maximized.
    \item \emph{$k$-IP Problem}: Find a collection $\mathcal{P}$ of edge-disjoint interesting $k$-paths in $G$ such that the sum of the interestingness scores of the paths in $\mathcal{P}$ is maximized.
\end{itemize}

Observe that one can consider the edge-disjoint graphs induced by edges with the same signature and solve these optimization problems separately. Therefore, from now on, we assume that the signatures are identical for all edges in $G$. We define a \emph{path partition} of a directed graph $G$ to be a collection $\mathcal{P}$ of edge-disjoint paths in $G$ such that each edge of $G$ appears in exactly one path of $\mathcal{P}$. Let $\mathcal{C}$ be an arbitrary collection of paths in $G$. The \emph{interestingness score} of $\mathcal{C}$, denoted by $\texttt{score}(\mathcal{C})$, is defined as the sum of the interestingness scores of the paths in $\mathcal{C}$. That is
\[
    \texttt{score}(\mathcal{C}) := \sum\limits_{\Pi \in \mathcal{C}} \texttt{score}(\Pi) = \sum_{\substack{\Pi \in  \mathcal{C}  \\ \Pi = e_1,e_2,\dots,e_k}} \sum_{i=1}^{k} w(e_i)\cdot \log{(i+1)}. 
\]

Given this definition of the interestingness score, we consider the following equivalent formulation of the aforementioned problems.

\begin{itemize}
    \item \emph{Max-IP Problem}: Find a path $\Pi$ in $G$ such that $\texttt{score}(\Pi)$ is maximized.
    \item \emph{IP Problem}: Find a path partition $\mathcal{P}$ of $G$ such that $\texttt{score}(\mathcal{P})$ is maximized.
    \item \emph{$k$-IP Problem}: Find an edge-disjoint set $\mathcal{P}$ of $k$-paths in $G$ such that $\texttt{score}(\mathcal{P})$ is maximized.
\end{itemize}


Kalyanaraman et al. \cite{DBLP:journals/jocg/KalyanaramanKK19} claim the following results: First, the Max-IP problem is NP-complete, even when all edge weights in the graph are equal to $1$. Second, the Max-IP problem can be solved in polynomial time when the graph $G$ is a directed acyclic graph (DAG). Third, on a DAG, the $k$-IP problem is solvable in polynomial time for $k = 2$, whereas for $k \geq 3$, the problem is NP-complete. 

The authors in \cite{DBLP:journals/jocg/KalyanaramanKK19} assume that the edge weights can be arbitrary real numbers. Moreover, they assume that the logarithm function can be computed in $O(1)$ time. Recall that in complexity theory, the model of computation used is a Turing machine, or any model belonging to the First Machine Class; see van Emde Boas \cite{van1991handbook} for an excellent overview. Moreover, the running time of algorithms is measured by the number of \emph{bit-operations}. Thus, input values are assumed to be integers, or rational numbers, and the running time (i.e., the number of bit-operations) is bounded by a function of the total number of bits in the binary representations of all input values. 
Because of this, it is not clear that the three problems defined above are in the complexity class NP. 

We also point out that there is a gap in their reduction for the NP-hardness proof of the $k$-IP problem; see \cite[Section 4.4.3]{marcsthesis}. 

The authors in \cite{DBLP:journals/jocg/KalyanaramanKK19} also conjecture that the IP problem is NP-complete. For the same reasons as above, it is not clear if this problem is in NP. 

In this paper, we prove that both the IP problem and the $k$-IP problem (for any fixed $k \geq 3$) are NP-hard when the graph $G$ is an edge-weighted directed acyclic graph, and all edge weights are natural numbers.

Throughout the rest of this paper, \emph{time}s always refers to the number of bit-operations.

\section{Hardness of the IP Problem} \label{sec:ip}

In this section, we show that the IP problem on directed acyclic graphs with positive integer weights is NP-hard. Consider the following decision version of the IP problem.

\begin{tcolorbox}[colback=white, colframe=teal, sharp corners=all, boxrule=0.25mm, rounded corners=all, width=\textwidth]
\textbf{IP Problem} \textbf{(decision version)}\\
\textbf{Input:} An edge-weighted directed acyclic graph $G=(V,E)$, a weight function $w : E \rightarrow \mathbb{N}$, and a positive integer $t$. \\
\textbf{Question:} Does there exist a path partition $\mathcal{P}$ of $G$ such that the interestingness score of $\mathcal{P}$ is at least $\log{t}$?
\end{tcolorbox}

\begin{remark}
The authors in \cite{DBLP:journals/jocg/KalyanaramanKK19} use $t$ instead of $\log t$. As we will see, by using $\log t$, our reduction takes polynomial time. 
\end{remark}

To prove NP-hardness, we reduce, in polynomial time, the $3$-SAT problem to the IP Problem. 

\begin{tcolorbox}[colback=white, colframe=teal, sharp corners=all, boxrule=0.25mm, rounded corners=all, width=\textwidth]
\textbf{$3$-SAT Problem}\\
\textbf{Input}: A set $X = \{x_1,x_2,...,x_n\}$ of Boolean variables and a collection $\mathcal{C} = \{C_1,C_2,...,C_m\}$ of clauses
over $X$ such that each clause is a disjunction of at most three literals. \\
\textbf{Question}: Does there exist a truth assignment to the $n$ variables in $X$ such that the Boolean expression $C_1 \land C_2 \land \cdots \land C_m$ evaluates to true?
\end{tcolorbox}

\begin{thm} \label{thm:ip}
    The IP problem is NP-hard.
\end{thm}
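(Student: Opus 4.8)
The plan is to first recast the decision problem as a purely integer comparison, so that real-valued logarithms never have to be evaluated. For any path partition $\mathcal{P}$, write $\texttt{score}(\mathcal{P}) = \sum_{\Pi \in \mathcal{P}} \sum_i w(e_i)\log(i+1) = \log \prod_{\Pi \in \mathcal{P}} \prod_i (i+1)^{w(e_i)}$. Since every base $(i+1)$ and every exponent $w(e_i)$ is a positive integer, the quantity $N(\mathcal{P}) := \prod_{\Pi \in \mathcal{P}} \prod_i (i+1)^{w(e_i)}$ is a positive integer, and $\texttt{score}(\mathcal{P}) \geq \log t$ holds if and only if $N(\mathcal{P}) \geq t$. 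This is exactly why the threshold is stated as $\log t$ rather than $t$ (cf.\ the remark): the target $\log t$ of the intended partition is a sum of polynomially many terms $w(e_i)\log(i+1)$ with small arguments, hence polynomial in the input size, so the integer $t$ has polynomially many bits and can be written down and compared against in polynomial time.

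Next I would build, from a 3-SAT instance with variables $x_1,\dots,x_n$ and clauses $C_1,\dots,C_m$, an edge-weighted DAG together with a threshold $t$. The mechanism I would exploit is that $\log(i+1)$ is strictly increasing in the position $i$: a heavy edge contributes far more when it sits late in a long path than when it is forced to the front of a short one. Each variable gadget would contain a branching vertex at which a path partition must commit to one of two matchings of its incoming edges to its outgoing edges; the two matchings are arranged to have equal local contribution, so the choice is ``free'' and encodes the truth value of $x_i$. This choice would be propagated, by dedicated connector edges (one per literal occurrence), into the clause gadgets, so that the value chosen for $x_i$ is seen consistently by every clause containing $x_i$ or $\lnot x_i$. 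Each clause gadget would be designed so that a distinguished heavy edge can be appended to a long path --- earning a large position bonus --- precisely when at least one of its literals has been set to satisfy it; otherwise that edge is trapped at an early position. I would then set $t$ so that $N(\mathcal{P}) \geq t$ forces the maximum bonus on all $m$ clauses simultaneously.

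With the construction in place, correctness splits into the two usual directions. For completeness, given a satisfying assignment I would exhibit the path partition that realizes the corresponding variable choices and, in each clause gadget, routes the distinguished edge through a satisfied literal into its late position, attaining score at least $\log t$. For soundness, I would argue the converse: from any path partition with $N(\mathcal{P}) \geq t$ I would read off the matching chosen in each variable gadget as a truth assignment, and show that reaching the threshold is impossible unless every clause gadget sits in its ``satisfied'' configuration, which forces the assignment to satisfy all clauses.

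The main obstacle, and where the weights must be engineered with care, is that the multiplier $\log(i+1)$ depends on the entire prefix of a path, so the gadget decisions are not a priori independent: re-matching edges at one vertex can shift the positions --- and hence the contributions --- of many downstream edges, and a globally clever partition might try to recover the threshold without honestly satisfying a clause. To control this I would lay the DAG out in fixed-length layers and connect gadgets only through connectors of prescribed length, so that the position of each scoring edge is pinned down up to the local gadget choice; this makes $N(\mathcal{P})$ decompose into a product of per-gadget factors and lets me argue that the intended partition is optimal and that its optimum is reached only by genuinely satisfying configurations. Verifying this decomposition rigorously --- that no alternative chaining of edges across gadget boundaries can beat the intended one --- is the crux of the proof.
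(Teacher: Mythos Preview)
Your high-level plan matches the paper's: reduce from 3-SAT, build variable and clause gadgets in a layered DAG, exploit the monotonicity of $\log(i+1)$, and phrase the threshold as $\log t$ so that the comparison becomes the integer inequality $N(\mathcal{P})\ge t$ (your first paragraph is exactly the paper's rationale for that choice). The one structural divergence is where the heavy weight lives. You place a distinguished heavy edge in each \emph{clause} gadget; the paper instead assigns the large weight ($29m$) to the terminal ``type-T'' edges of each \emph{variable} gadget. That weight is what forces every variable gadget to decompose into $\ell_i$ edge-disjoint length-$6$ paths $s_i\to t_i$ (dropping any heavy edge from position $6$ to position $\le 5$ costs more than everything else in the graph can recoup), and there are exactly two ways to do this---through all $y$-vertices or through all $\bar y$-vertices---which encodes the truth value \emph{consistently across all $\ell_i$ occurrences} of $x_i$. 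The clause bonus is then a pure length effect (a $4$-path from $z_{j,0}$ through a free type-U edge versus shorter fragments), with all clause-side weights equal to $1$.

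You correctly flag the crux---ruling out alternative chainings that reach the threshold without genuine satisfaction---but do not carry it out; this is where the paper spends its effort, via two explicit position-by-position edge counts (one pinning every type-T edge at position $6$, one forcing every clause to complete a $4$-path), with the constant $29$ engineered to make the first inequality close. More importantly, in your variant it is not clear what enforces \emph{consistency}: if the variable gadget's two matchings have equal local score and the heavy edges are all on the clause side, nothing in your sketch prevents a partition from routing different occurrences of $x_i$ as if $x_i$ had different truth values in different clauses, trivially ``satisfying'' every clause. In the paper this consistency is a structural byproduct of the heavy type-T edges forcing a single global routing per variable gadget; your design needs an analogous mechanism (and a proof that it cannot be circumvented) before the soundness direction can go through.
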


\begin{proof}
    Let the Boolean formula $f = C_1 \land C_2 \land C_3 \land \cdots \land C_m$ over the variables $x_1, x_2, x_3, \cdots, x_n$ be an instance of the $3$-SAT problem. We assume that no clause of $f$ contains both the positive and negative literal of the same variable. Moreover, we assume each clause contains exactly three literals (if a clause has fewer than three literals, we repeat a literal). For each $1 \le i \le n$, let $\ell_i$ be the number of different clauses that contain the variable $x_i$. Then, $\sum_{i=1}^n \ell_i = 3m$.
    
    Initially, we introduce an unweighted directed acyclic graph $G_f$ corresponding to the formula $f$. Then using $G_f$, we create an edge-weighted directed acyclic graph $G^*_f$ as an instance of the IP problem. Consider a variable $x_i$ ($1 \le i \le n$). We define a variable gadget $G_i \subset G_f$ corresponding to $x_i$ in the following way. Let $\alpha_1, \alpha_2, \dots, \alpha_{\ell_i}$ be the indices of the clauses in which $x_i$ or $\overline{x}_i$ appear. Moreover, let $\alpha_{\ell_{i}+1} = \alpha_{\ell_1}$. The variable gadget $G_i$ corresponding to $x_i$ consists of $4\ell_i + 2$ vertices labeled:
    
    \begin{itemize}
        \item $s_i$ and $t_i$ (source and terminal vertices),
        \item $v_{i,\alpha_1}, v_{i,\alpha_2}, \dots, v_{i,\alpha_{\ell_i}}$,
        \item $u_{i,\alpha_1}, u_{i,\alpha_2}, \dots, u_{i,\alpha_{\ell_i}}$,
        \item $y_{i,\alpha_1}, y_{i,\alpha_2}, \dots, y_{i,\alpha_{\ell_i}}$ ($y$-vertices), and
        \item $\overline{y}_{i,\alpha_1}, \overline{y}_{i,\alpha_2}, \cdots, \overline{y}_{i,\alpha_{\ell_i}}$ ($\overline{y}$-vertices).
    \end{itemize}
    
    The edge set of $G_i$ is defined as follows. For each $1 \le j \le \ell_i$,
    
    \begin{itemize}
        \item $(s_i, v_{i,\alpha_j}) \in E(G_i)$,
        \item $(v_{i,\alpha_j}, y_{i,\alpha_j}) \in E(G_i)$,
        \item $(v_{i,\alpha_j}, \overline{y}_{i,\alpha_j}) \in E(G_i)$,
        \item $( {y}_{i,\alpha_j},u_{i,\alpha_j}), (\overline{y}_{i,\alpha_j},u_{i,\alpha_{j+1}}) \in E(G_i)$ (type-$\mathrm{U}$ edges), and
        \item $(u_{i,\alpha_j}, t_i) \in E(G_i)$ (type-$\mathrm{T}$ edges).   
    \end{itemize}

    Refer to \cref{fig:gadgets} for an example of a variable gadget. To complete the construction of $G_f$, for each $1 \le j \le m$, we consider a clause gadget $H_j \subset G_f$ corresponding to the clause $C_j$. Let $x_{\beta_1}, x_{\beta_2}, \text{ and } x_{\beta_3}$ be the three variables appearing in $C_j$. The subgraph $H_j$ consists of $3$ vertices, $z_{j,0}$, $z_{j,1}$, and $z_{j,2}$. The edge set of the clause gadget $H_j$ consists of the directed edges $(z_{j,0},z_{j,1})$ and $(z_{j,1},z_{j,2})$. Moreover, for each $1 \le p \le 3$, if the variable $x_{\beta_p}$ is positive in $C_j$, then we add the directed edge $(z_{j,2}, y_{\beta_p,j})$; otherwise, we add the directed edge $(z_{j,2}, \overline{y}_{\beta_p,j})$. This concludes the construction of the graph $G_f$.

    Finally, to construct the edge-weighted directed acyclic graph $G^*_f$, for each $1 \le j \le \ell_i$, we subdivide the edge $(s_i, v_{i,\alpha_j})$ twice, i.e., we replace the edge $(s_i, v_{i,\alpha_j})$ by the directed path $s_i = t_{i,\alpha_j,0}, t_{i,\alpha_j,1}, t_{i,\alpha_j,2}, t_{i,\alpha_j,3} = v_{i,\alpha_j}$. Observe that because of these subdivisions, all paths from $s_i$ to $t_i$ have $6$ edges. Each type-$\mathrm{T}$ edge has a weight equal to $29m$, and all other edges have a weight equal to $1$. Observe that, for each $1 \le i \le n$, $G^*_f$ contains exactly $\ell_i$ edge-disjoint paths from $s_i$ to $t_i$ each of which has interestingness score 
    \[ W = \log{(6!)} +  29m \cdot \log{7}. 
    \]
    This value is the largest possible interestingness score of a path in $G^*_f$. Moreover, there are exactly two ways to find these $\ell_i$ edge-disjoint paths, namely, $\ell_i$ edge-disjoint paths going through $y$-vertices and $\ell_i$ edge-disjoint paths going through $\overline{y}$-vertices.

    By construction, $G^*_f$ is acyclic. The instance for the IP problem is the graph $G^*_f$ with parameter 
    \[ t := 2^{m \cdot (\log{(5!)} + 5 \log{2} + 2 \log{3} + 3W)} = (2^5 \cdot 3^2 \cdot 5! \cdot (6!)^3)^m \cdot (7^{87})^{m^2} .
    \] 

\begin{figure}
    \centering
    \includegraphics[width=0.95\linewidth]{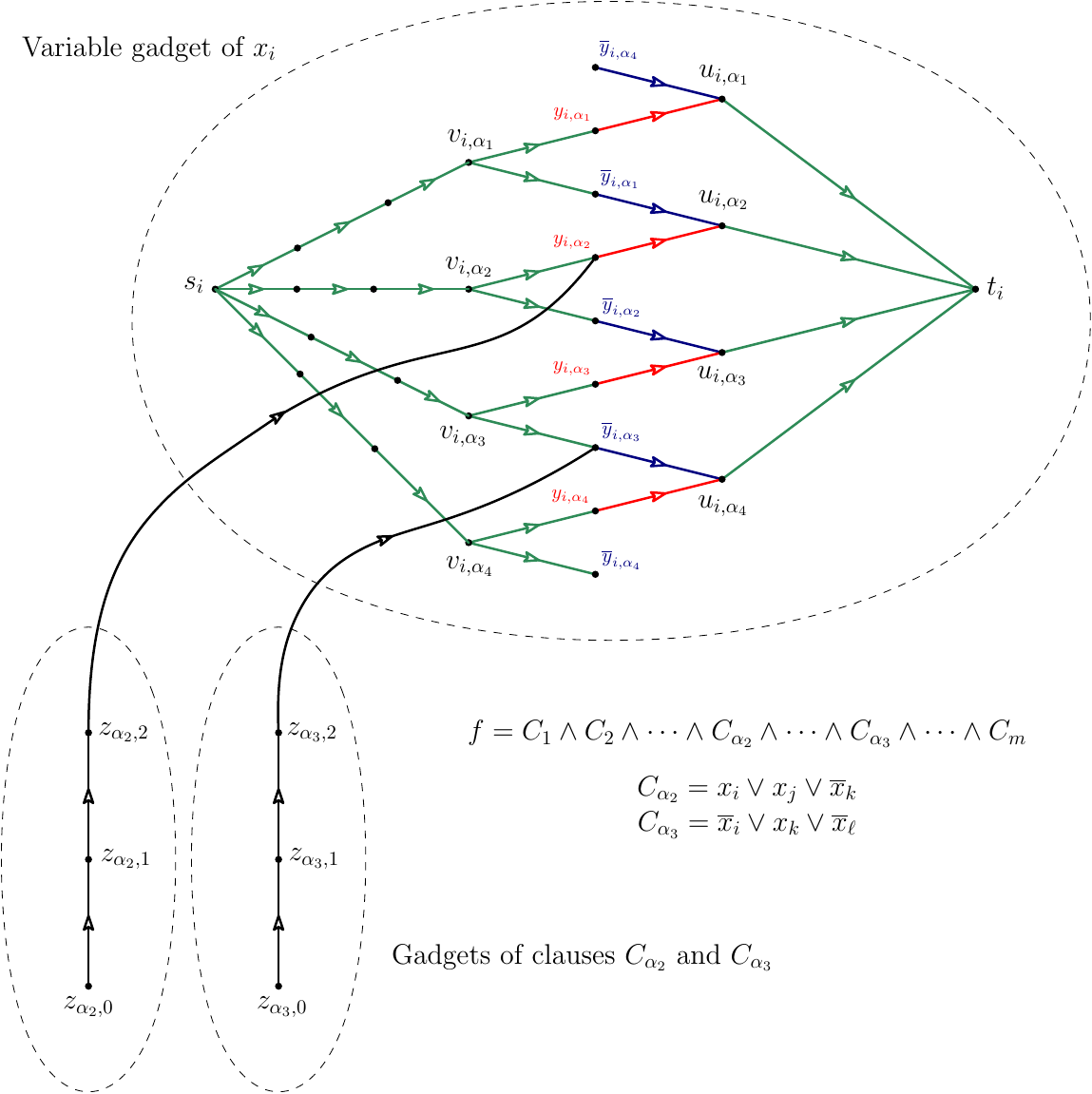}
    \caption{Let $f=C_1\land C_2\land\dots\land C_{\alpha_2} \land \dots \land C_{\alpha_3} \land \dots \land C_m$, where $C_{\alpha_2} = x_i \lor x_j \lor \overline{x}_k$ and $C_{\alpha_3} = \overline{x}_i \lor x_k \lor \overline{x}_{\ell}$. In this figure, we show the variable gadget for $x_i$ and the clause gadgets for $C_{\alpha_2}$ and $C_{\alpha_3}$. The vertices {\color{ultramarine} $\overline{y}_{i,\alpha_4}$} on the top and the bottom of the variable gadget are the same vertex (they are identified). The red and blue edges in the figure are type-$\mathrm{U}$ edges. They indicate the positive and negative appearances of the variable $x_i$, respectively.}
    \label{fig:gadgets}
\end{figure}

    \begin{clm}
    The formula $f$ is satisfiable if and only if $G^*_f$ contains an edge-disjoint path partition with an interestingness score of at least $\log{t}$.
    \end{clm}

    \begin{proof}
        
        Let $f$ be satisfiable. We construct a path partition $\mathcal{P}'$ of $G^*_f$ such that ${\texttt{score}(\mathcal{P}')} \ge \log t$. For each variable $x_i$, if $x_i$ is $\mathrm{TRUE}$, then we add the $\ell_i$ paths with interestingness score $W$ from $s_i$ to $t_i$ going through $\overline{y}$-vertices in $G_i$ to $\mathcal{P}'$. Otherwise, we add the $\ell_i$ paths with interestingness score $W$ from $s_i$ to $t_i$ going through $y$-vertices in $G_i$ to $\mathcal{P}'$.

        Moreover, for each clause, there is at least one literal with a $\mathrm{TRUE}$ assignment. Let $x_{i}$ be a $\mathrm{TRUE}$ literal in $C_{\alpha_j}$. If $x_{i}$ appears as a positive literal in $C_{\alpha_j}$, then none of the paths from $s_i$ to $t_i$ defined in the previous paragraph contains the edge $({y}_{i, \alpha_{j}}, u_{i,\alpha_{j}})$. In this case, we add the directed path from $z_{\alpha_j, 0}$ to $u_{i,\alpha_{j}}$ and the directed path corresponding to the single edge $(v_{i,\alpha_{j}}, {y}_{i, \alpha_{j}})$ to $\mathcal{P}'$. Otherwise, $x_{i}$ appears as a negative literal in $C_{\alpha_j}$. Thus, none of the paths from $s_i$ to $t_i$ passes through the edge $(\overline{y}_{i, \alpha_{j}}, u_{i,\alpha_{j+1}})$. In this case, we add the directed path form $z_{0, \alpha_j}$ to $u_{i,\alpha_{j+1}}$ and the directed path corresponding to the single edge $(v_{i,\alpha_{j}}, \overline{y}_{i, \alpha_{j}})$ to $\mathcal{P}'$. Finally, assume $x_a$ and $x_b$ are the other two literals in $C_{\alpha_j}$. Observe that there are edges $e_a$ and $e_b$ connecting $z_{\alpha_j, 2}$ to the variable gadgets of $x_a$ and $x_b$, respectively. We add the two edge paths corresponding to edges $e_a$ and $e_b$ to $\mathcal{P}'$.
        
        Note that the edges of $G^*_f$ that are not covered by the paths introduced so far constitute paths with two edges, and there are exactly $\sum_{i=1}^{n} \ell_i - m = 2m$ such paths (after adding paths of interestingness score $W$, there are $\sum_{i=1}^{n} \ell_i$ many $2$-paths located in variable gadgets and $m$ of them are covered by paths starting at clauses). We add these 
        $2$-paths to the path partition in order to cover all edges in $G^*_f$. 
        
        In summary, $\mathcal{P}'$ contains
        \begin{itemize}
            \item $\sum_{i=1}^{n} \ell_i = 3m$ many $6$-paths,
            \item $m$ many $4$-paths,
            \item $2m$ many $2$-paths, and
            \item $3m$ many $1$-paths.
            
        \end{itemize}
        
        Therefore, for the interestingness score of $\mathcal{P}'$, we have 
        \begin{eqnarray*} 
        {\texttt{score}(\mathcal{P}')} & = & 
        m \cdot \left(\log{(5!)} + 3\log{2}\right) + 2m \cdot \log{3!} + 3m \cdot W  \\ 
        & = & m \cdot \left(\log{(5!)} + 5 \log{2} + 2 \log{3} + 3W\right) \\ 
        & = & \log t . 
        \end{eqnarray*} 
        
        For the other direction of the proof, assume we have a path partition $\mathcal{P}$ such that ${\texttt{score}(\mathcal{P})} \ge \log t$. First, we prove that for each variable $x_i$, there exist $\ell_i$ paths from $s_i$ to $t_i$ with interestingness score
        $W$ in $\mathcal{P}$. For the sake of contradiction, assume there is a type-$\mathrm{T}$ edge $(u_{i,\alpha_j}, t_i)$ in $G^*_f$ where $(u_{i,\alpha_j}, t_i)$ is not the $6$th edge on some path in $\mathcal{P}$. To upper bound the interestingness score of $\mathcal{P}$, for each edge $e$ in $G^*_f$, we consider the furthest possible position of the edge $e$ along all the paths going through $e$. In particular, by the construction of the graph $G^*_f$, the contribution of the edge $(u_{i,\alpha_j}, t_i)$ to the interestingness score of $\mathcal{P}$ is at most $29m \cdot \log{6}$.

        Since each path in $G^*_f$ has at most $6$ edges, $\cal P$ contains at most
        \begin{itemize}
            \item $\sum_{i=1}^{n} \ell_i - 1 = 3m - 1$ edges, each of which is at the $6$th position in some path of $\cal P$, 
            \item $\sum_{i=1}^{n} 2\ell_i + 1 = 6m + 1$ edges, each of which is at the $5$th position in some path of $\cal P$; exactly one of them has weight $29m$, 
            \item $\sum_{i=1}^{n} 2\ell_i = 6m$ edges, each of which is at the $4$th position in some path of $\cal P$,
            \item $m + \sum_{i=1}^{n} \ell_i = 4m$ edges, each of which is at the $3$rd position in some path of $\cal P$,
            \item $m + \sum_{i=1}^{n} \ell_i = 4m$ edges, each of which is at the $2$nd position in some path of $\cal P$, and
            \item $2m + \sum_{i=1}^{n} \ell_i = 5m$ edges, each of which is at the $1$st position in some path of $\cal P$.
        \end{itemize}
        
        Hence, recall that $W = \log{(6!)} + 29m\cdot\log{7}$,
        \begin{eqnarray*}
            {\texttt{score}(\mathcal{P})} & \le & 
            (3m-1) \cdot 29m \cdot \log{7} + 29m \cdot \log{6} + 6m \cdot \log{6} + 6m \cdot \log{5} \\ 
            & & + 4m \cdot \log{4} + 4m \cdot \log{3} + 5m \cdot \log{2} \\
            & = & m \cdot (\log{(5!)} + 5 \log{2} + 2 \log{3} + 3W) - 29m \cdot \log{\left(\frac{7}{6}\right)} \\ 
            & & + m \cdot (3\log{30} - \log{5} - 3\log{2} - 2\log{3}).
        \end{eqnarray*}
        Since $(3 \log{(30)} - \log{5} - 3 \log{2} - 2 \log{3}) - 29 \log{({7}/{6})} < 0$, we have 
        \[ {\texttt{score}(\mathcal{P})} < m \cdot (\log{(5!)} + 5 \log{2} + 2 \log{3} + 3W) = \log t , 
        \]
        which is a contradiction. Therefore, for each variable $x_i$, there exist $\ell_i$ paths from $s_i$ to $t_i$ with interestingness score $W$ in $\mathcal{P}$.

        Now, we claim that for each clause $C_{\alpha_j}$, there is a $4$-path that starts at the vertex $z_{\alpha_j, 0}$ and ends with a type-$\mathrm{U}$ edge. Since each type-$\mathrm{T}$ edge is in one of the $\ell_i$ paths from $s_i$ to $t_i$ in the corresponding variable gadget, the directed paths starting from $z_{\alpha_j,0}$ can have at most $4$ edges. Again, for the sake of contradiction, assume that there is a clause $C_{\alpha_j}$ such that the two edges in its gadget are not covered with a $4$-path ending at a type-$\mathrm{U}$ edge. For that particular clause, there are three different type-$\mathrm{U}$ edges that are not the $4$th edge on some directed path. We can now upper bound the value of $\texttt{score}(\mathcal{P})$, again by considering the furthest position of each edge $e$ in all the paths going through $e$.
        
        It follows that $\cal P$ contains at most
        \begin{itemize}
            \item $\sum_{i=1}^{n} \ell_i = 3m$ edges, each of which is at the $6$th position in some path of $\cal P$,
            \item $\sum_{i=1}^{n} \ell_i = 3m$ edges, each of which is at the $5$th position in some path of $\cal P$,
            \item $m + \sum_{i=1}^{n} \ell_i - 1 = 4m - 1$ edges, each of which is at the $4$th position in some path of $\cal P$,
            \item $m + \sum_{i=1}^{n} \ell_i  + 1 = 4m + 1$ edges, each of which is at the $3$rd position in some path of $\cal P$,
            \item $3m + \sum_{i=1}^{n} \ell_i = 6m$ edges, each of which is at the $2$nd position in some path of $\cal P$, and
            \item $6m + \sum_{i=1}^{n} \ell_i = 9m$ edges, each of which is at the $1$st position in some path of $\cal P$.
        \end{itemize}

        Therefore,
        \begin{eqnarray*}
            {\texttt{score}(\mathcal{P})} & \le & 
            3m \cdot 29m \cdot \log{7} + 3m \cdot \log{6} + (4m-1) \cdot \log{5} \\
            & & + (4m + 1) \cdot \log{4} + 6m \cdot \log{3} + 9m \cdot \log{2} \\
             & = & m \cdot (\log{(5!)} + 5 \log{2} + 2 \log{3} + 3W) - \log{5} + \log{4} \\
             & < & m \cdot (\log{(5!)} + 5 \log{2} + 2 \log{3} + 3W) \\
             & = & \log t. 
        \end{eqnarray*}
        This is a contradiction. Therefore, for each clause $C_{\alpha_j}$, there is a $4$-path in $\mathcal{P}$ that starts at the vertex $z_{\alpha_j, 0}$ and ends with a type-$\mathrm{U}$ edge.
        
        The truth-value assignment of the variables depends on the paths that $\mathcal{P}$ is made of. For each variable $x_i$, Since there are $\ell_i$ different paths with interestingness score $W$ in $G_i$, either none of the vertices labeled $y_{i,\alpha_j}$ or none of the vertices labeled $\overline{y}_{i,\alpha_j}$ are covered by these paths. In the former case, $x_i$ is assigned the truth-value $\mathrm{TRUE}$. In the latter case, the truth-value $\mathrm{FALSE}$ is assigned. Moreover, given a clause $C_{\alpha_j}$, assume the $4$-path covering the clause gadget of $C_{\alpha_j}$ ends with a type-$\mathrm{U}$ edge in the variable gadget of $x_i$. Then the truth-value assignment of $x_i$ makes $C_{\alpha_j}$ evaluate to TRUE.
        \renewcommand{\qedsymbol}{\scalebox{1.3}{$\triangle$}}
    \end{proof}

    Note that the formula $f$ has size $\Omega(m + n)$. The graph $G^*_f$ contains $O(n + m)$ vertices and edges, with each edge weight requiring $O(\log m)$ bits. The value of $t$ can be stored using $O(m^2)$ bits and computed in $O(m^4 \log m)$ time. Thus, constructing the instance is achievable in polynomial time. This concludes the proof of the reduction and shows that the IP problem is NP-hard, even when there are only two distinct edge weights.
\end{proof}


\section{Hardness of $k$-IP Problem} \label{sec:kip}

In this section, we prove the NP-hardness of the $k$-IP problem on directed acyclic graphs when $k \ge 3$. In fact, we show that the $k$-IP problem is NP-hard even when all the edge weights are equal to $1$. If the edge weights are equal to $1$, the weight of every $k$-path is equal to $\log{((k+1)!)}$. Consider the following decision version of the $k$-IP problem for a fixed integer $k \ge 3$.

\begin{tcolorbox}[colback=white, colframe=teal, sharp corners=all, boxrule=0.25mm, rounded corners=all, width=\textwidth]
\textbf{$k$-IP Problem (decision version)}:\\
\textbf{Input:} An edge-weighted directed acyclic graph $G=(V,E)$, a weight function $w : E \rightarrow \mathbb{N}$, and a positive integer $t$. \\
\textbf{Question:} Does there exist an edge-disjoint collection $\mathcal{P}$ of $k$-paths in $G$ such that the interestingness score of $\mathcal{P}$ is at least $\log{t}$?
\end{tcolorbox}

We will reduce, in polynomial time, the $(3,2)$-set-cover problem to the $k$-IP problem. The $(3,2)$-set-cover problem is NP-complete by a simple polynomial-time reduction from the vertex cover problem on cubic graphs which is known to be NP-complete \cite{DBLP:journals/tcs/GareyJS76}. The decision version of the $(3,2)$-set-cover problem is defined as follows.

\begin{tcolorbox}[colback=white, colframe=teal, sharp corners=all, boxrule=0.25mm, rounded corners=all, width=\textwidth]
\textbf{$(3,2)$-set-cover Problem}:\\
\textbf{Input:} A triple $(U, \mathcal{S}, \tau)$, where
\begin{itemize}
    \item $U$ is a set of $n$ elements,
    \item $\mathcal{S} = \{S_1, S_2, \dots, S_m\}$ is a set of $m$ subsets of $U$,
    \item $|S_i| = 3$ for each $1 \le i \le m$,
    \item each element of $U$ appears in exactly two sets in $\mathcal{S}$, and
    \item $\tau$ is an integer with $1 \leq \tau \leq m$.
\end{itemize}
\textbf{Question:} Are there at most $\tau$ sets in $\mathcal{S}$ such that their union is equal to $U$?
\end{tcolorbox}

Note that $m \geq n/3$. Let $U = \{x_1, x_2, \cdots x_n\}$ be a set of $n$ elements and let $\mathcal{S} = \{S_1, S_2, \cdots, S_m\}$ be a collection of $m$ subsets of $U$. For two indices $1 \le i \le n$ and $1 \le j \le m$, we say that a subset $S_j$ \emph{covers} $x_i$ if $x_i \in S_j$. We say that a subset $\mathcal{S}' \subseteq \mathcal{S}$ \emph{covers} $U$, if for each $x_i \in U$ ($1 \le i \le n$), there exist a set $S_j \in \mathcal{S}'$ such that $S_j$ covers $x_i$.

\begin{figure}
    \centering
    \includegraphics[width=0.96\linewidth]{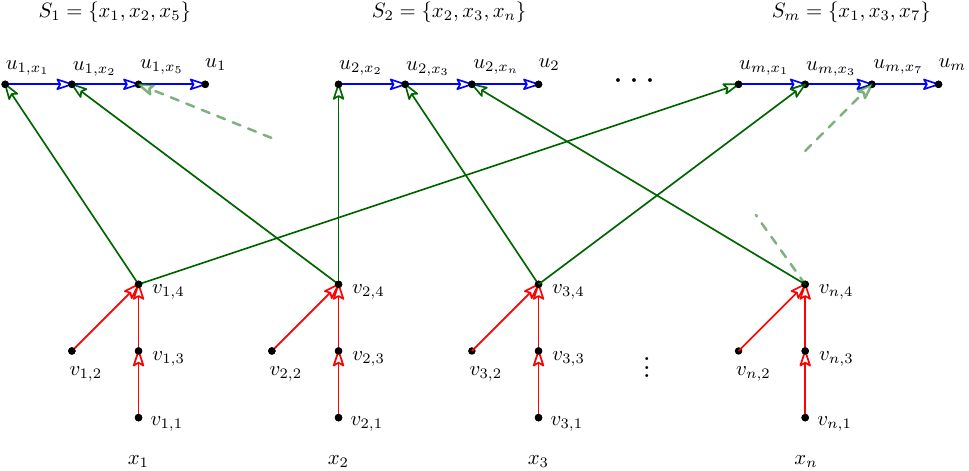}
    \caption{Construction of the graph $G$ for a given instance $(U, \mathcal{S},\tau)$ of the $(3,2)$-set-cover problem. The red, blue, and green edges in $G$ represent the elements of $U$, the elements of the sets in $\cal S$, and the membership of elements of $U$ in the sets of $\cal S$, respectively.}
    \label{fig:gadgets2}
\end{figure}

\begin{thm} \label{thm:kip}
    For any fixed integer $k \geq 3$, the $k$-IP problem is NP-hard.
\end{thm}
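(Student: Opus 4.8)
The plan is to reduce from the $(3,2)$-set-cover problem, mirroring the structure of the IP reduction but now forcing all chosen paths to have exactly $k$ edges. Given an instance $(U,\mathcal{S},\tau)$, I would build a directed acyclic graph $G$ in which each set $S_j\in\mathcal{S}$ has a gadget and each element $x_i\in U$ has a gadget, connected by the three kinds of edges indicated in \cref{fig:gadgets2}: ``element'' edges (red), ``set-membership'' edges (blue), and ``coverage'' edges (green). The central design goal is that, because every path in the collection must contain exactly $k$ edges, the way an element gadget's edges can be packed into $k$-paths should encode whether the element is covered by a \emph{selected} set, and the number of set gadgets that are ``activated'' (i.e.\ whose internal edges are consumed by long paths rather than left as short leftover paths) should correspond to the number of sets in the cover. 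I would pad each gadget with subdivided edges so that the only way to realize the maximum possible interestingness score is to route edges into full $k$-paths in a prescribed, semantically meaningful manner.

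First I would fix the construction precisely: attach to each set $S_j$ a directed path long enough that traversing it contributes the bulk of a $k$-path's score, and route the three elements of $S_j$ through it so that ``using'' the set corresponds to choosing that set in the cover. Each element $x_i$ lies in exactly two sets, so its gadget has exactly two incoming coverage edges, and I would arrange that $x_i$'s edges can be completed into full $k$-paths precisely when at least one of its two sets is selected. Since all weights are $1$, the score of a $k$-path is the fixed constant $\log((k+1)!)$, so the objective reduces to \emph{maximizing the number of edge-disjoint $k$-paths}, i.e.\ a combinatorial packing count; the threshold $\log t$ then translates into a target number of $k$-paths. I would set the target so that achieving it forces every element to be covered while allowing at most $\tau$ set gadgets to be fully activated.

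The two directions would then run as follows. For the forward direction, given a cover $\mathcal{S}'$ with $|\mathcal{S}'|\le\tau$, I would route the $k$-paths by activating exactly the gadgets of sets in $\mathcal{S}'$, sending each element through one of its covering selected sets and leaving the remaining edges assembled into the promised number of leftover $k$-paths, and verify the count matches the target. For the converse, assuming a $k$-path packing meeting the threshold, I would argue that the count can only be met if every element gadget is fully used (forcing coverage) and at most $\tau$ set gadgets contribute long paths (forcing $|\mathcal{S}'|\le\tau$), reading off a valid cover of size at most $\tau$. The main obstacle, and where I expect to spend the most care, is the converse rigidity argument: I must choose the gadget lengths and the target count so that no ``cheating'' packing exists — for instance, one that splits edges across gadgets to fabricate extra $k$-paths without honestly activating a small set cover. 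Because the $k$-path constraint is far less forgiving than the arbitrary-length IP case (one cannot simply absorb leftover edges into short paths as in \cref{thm:ip}), the counting must be exact rather than merely bounded, and getting every gadget to have a unique maximal packing up to the intended choices is the crux of the proof.
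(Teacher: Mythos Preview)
Your plan matches the paper's proof: reduce from $(3,2)$-set-cover with unit edge weights so that the score collapses to a count of edge-disjoint $k$-paths, build element and set gadgets joined by coverage edges as in \cref{fig:gadgets2}, set the threshold to $2n+m-\tau$ paths, and handle general $k$ by appending length-$(k-3)$ tails. One reassurance: the converse is easier than you fear---the paper simply normalizes a maximum packing so that each element gadget contributes two $3$-paths starting at $v_{i,1}$ and $v_{i,2}$ (swapping paths if necessary), after which the count forces at least $m-\tau$ intact set-gadget $3$-paths and hence a cover of size at most $\tau$; no exact-count or unique-packing rigidity is required.
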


\begin{proof}
    First, we prove the statement for $k = 3$. Let $(U, \mathcal{S},\tau)$ be an instance of $(3,2)$-set-cover. 
    Using this instance, we construct a directed acyclic graph $G$ as follows; see \cref{fig:gadgets2} for an example.
    For each set $S_i = \{x_p, x_q, x_r\} \in \mathcal{S}$ with $1 \le p < q < r \le n$, consider four distinct vertices $u_{i,x_p},u_{i,x_q},u_{i,x_r}, u_{i}$ (type-$S$ vertices), and for each element $x_i \in U$ consider four distinct vertices $v_{i,1},v_{i,2},v_{i,3},v_{i,4}$. All edge weights in $G$ are equal to $1$ and the edge set of $G$ consists of the following directed edges:

    \begin{itemize}
        \item $(v_{i,1}, v_{i,3})$, $(v_{i,3}, v_{i,4})$, and $(v_{i,2}, v_{i,4})$, for each $1 \le i \le n$,  
        \item $(u_{i,x_p}, u_{i,x_q})$, $(u_{i,x_q},u_{i,x_r})$, and $(u_{i,x_r}, u_{i})$,  for each $1 \le i \le m$, where  $S_i = \{x_p, x_q, x_r\}$ with $1 \le p < q < r \le n$,

        \item $(v_{i,4}, u_{\alpha, x_i})$ and $(v_{i,4}, u_{\beta, x_i})$, for each $1 \le i \le n$ and $S_\alpha, S_\beta \in \mathcal{S}$ such that $x_i \in S_\alpha \cap S_\beta$.
    \end{itemize}
    Finally, we define the target parameter $t$ to be 
    \[ t = 4!^{2n + m - \tau} .     
    \]
    
    \begin{clm}
    The $(3,2)$-set-cover instance $(U, \mathcal{S}, \tau)$ has a cover of size at most $\tau$ if and only if $G$ has an edge-disjoint collection of $3$-paths with interestingness score at least
    \[ \log t = (2n+m - \tau ) \log (4!) . 
    \]
    \end{clm}

    \begin{proof}
        Since all edge weights in $G$ are equal to $1$, we only need to prove that the $(3,2)$-set-cover instance $(U, \mathcal{S},\tau)$ has a cover of size at most $\tau$ if and only if $G$ contains at least $2n + m - \tau$ edge-disjoint $3$-paths.
        
        Assume $(U,S,\tau)$ has a cover of size at most $\tau$, i.e., there is a set of indices $\mathcal{I} \subseteq \{1,2,\dots,m\}$ of size at most $\tau$ such that $\bigcup_{i\in\mathcal{I}} S_i = U$. We construct a set $\mathcal{P}$ of edge-disjoint 3-paths of size at least $2n+m-\tau$ for $G$. For each $i \in \{1,2,\dots,m\} \setminus \mathcal{I}$, if $S_i = \{x_p, x_q, x_r\}$ with $1 \le p < q < r \le n$, then we add the $3$-path $u_{i,x_p},u_{i,x_q},u_{i,x_r}, u_{i}$ to $\mathcal{P}$. For each element $x_i \in U$, we do the following. Let $\alpha \in \mathcal{I}$ and $\beta \in \{1,2,\dots,m\}$ such that $\alpha \neq \beta$, $x_i \in S_\alpha$, $x_i \in S_\beta$. We add the $3$-path that starts at $v_{i,2}$ and passes though the vertex $u_{\alpha, x_i}$ to $\mathcal{P}$. This path is unique by construction. We also add the $3$-path $v_{i, 1},v_{i,3},v_{i,4},u_{\beta,x_i}$ to $\mathcal{P}$. By construction, the paths in $\mathcal{P}$ are pairwise edge-disjoint and $|\mathcal{P}| \ge 2n + m - \tau$.

        For the other direction, let $\mathcal{P}$ be a maximum size collection of edge-disjoint $3$-paths of $G$. By the assumption, we have $|\mathcal{P}| \ge 2n+m-\tau$. Without loss of generality, we assume that no path in $\mathcal{P}$ starts at a vertex labeled $v_{i, 3}$ or $v_{i, 4}$ (such a path can be replaced with a path that starts at $v_{i,1}$ or $v_{i,2}$). Moreover, we can assume that for each $1 \le i \le n$, there are two paths in $\mathcal{P}$ starting at $v_{i,1}$ and $v_{i,2}$. Otherwise, either we can replace some path with a new path that starts at a vertex labeled $v_{i,1}$ or $v_{i,2}$, or we can add a new $3$-path starting at a vertex labeled $v_{i,1}$ or $v_{i,2}$ which contradicts the maximality of $\mathcal{P}$. Therefore, at least $m - \tau$ of the $3$-paths in $\mathcal{P}$ contains only vertices labeled $u$ and each such $3$-path corresponds to a set in $\mathcal{S}$. Let $\mathcal{J} \subseteq \{1,2,\dots,m\}$ be the indices of these sets in $\mathcal{S}$. Note that $|\mathcal{J}| \ge m - \tau$. Let $\mathcal{I} = \{1,2,\dots,m\} \setminus \mathcal{J}$. We claim that the subset of $\mathcal{S}$ with indices in $\mathcal{I}$ creates a $(3,2)$-cover for the set $U$. Recall that for each $1 \le i \le n$, there is a path in $\mathcal{P}$, that starts at the vertex $v_{i,2}$. The last two vertices of this $3$-path have label $u$. Let $S_j \in \mathcal{S}$ be the set corresponding to these two vertices. Note that, $j \in \mathcal{I}$. Therefore, each element $x_i$ is covered by a set whose index belongs to $\mathcal{I}$. Finally note that $|\mathcal{I}| \le m - (m - \tau) = \tau$.
        \renewcommand{\qedsymbol}{\scalebox{1.3}{$\triangle$}}
    \end{proof}
    
    The instance $(U, \mathcal{S},\tau)$ has size 
    \[ \Omega( m \log n + \log \tau ) = \Omega(m \log n) .    
    \]
    Since $m \geq n / 3$, the graph $G$ has size $O(m)$, and the number of bits in the binary representation of the value $t$ is 
    \[ (2n+m-\tau) \log (4!) = O(n+m) = O(m) . 
    \]
    Therefore, the reduction is polynomial in the input size.
    This concludes the proof of NP-hardness for the $k$-IP problem when $k = 3$. 

    For $k > 3$, the argument remains the same. However, in the construction of the graph $G$, we make the following modifications. For each $1 \leq i \leq n$, we append a path with $k-3$ edges to the vertices labeled $v_{i,1}$ and $v_{i,2}$. Similarly, for each $1 \leq j \leq m$ where $S_j = \{x_p, x_q, x_r\}$ with $1 \leq p < q < r \leq n$, we append a path with $k-3$ edges to the vertices labeled $u_{j,p}$.
    Moreover, we define the target parameter $t$ to be 
    \[ t = (k+1)!^{2n + m - \tau} . \]
    Since $k$ is fixed (i.e., $k$ is a constant), the reduction still takes polynomial time.
\end{proof}

\section{Conclusion}
\label{sec:future} 

We have shown that the IP problem and the $k$-IP problem (for any fixed $k \geq 3$) are NP-hard for edge-weighted directed acyclic graphs. 

We have not been able to prove that these problems are in NP. 
The key challenge lies in the fact that the interestingness score of a collection of paths involves sums of logarithms, which can be transcendental numbers. It is unclear if such sums can be compared to the target value $\log t$ in a polynomial number of bit-operations. One approach is to to replace $\texttt{score}$ by $2^{\texttt{score}}$, and replace $\log t$ by $t$. Even though there are no logarithms anymore, the number of bits in the binary representations becomes exponential. Proving whether these two problems are in NP is left as an open problem.

Given the known hardness results, it would be worthwhile to explore approximation algorithms for these problems. A straightforward greedy approach to the $k$-IP problem provides a $1/k$-approximation algorithm. 
Several heuristic algorithms have also been explored for these optimization problems, as discussed in the paper by Kalyanaraman et al.~\cite{DBLP:journals/jocg/KalyanaramanKK19} and in Vicuna's Master's Thesis~\cite{marcsthesis}. Designing better approximation algorithms for these problems remains an exciting direction for future research.

\newpage

\bibliographystyle{plainurl}
\bibliography{main}

\newpage

\end{document}